
\documentclass[journal,12pt,onecolumn,draftclsnofoot,]{IEEEtran}
%

\usepackage{amsmath,amssymb,amsfonts}
\usepackage{algorithm}
\usepackage{algorithmic}
\usepackage{graphicx}
\usepackage{textcomp}

\hyphenation{op-tical net-works semi-conduc-tor}

\usepackage{amsthm}
\mathchardef\mhyphen="2D
\newtheorem{theorem}{Theorem}

\newtheorem{lemma}{Lemma}

\theoremstyle{definition}
\newtheorem{definition}{Definition}[]

\usepackage{color}

\usepackage{graphicx,subfigure,verbatim}

\usepackage{enumitem}
\usepackage{multirow}
\usepackage{array}
\usepackage{tabularx}
\usepackage{booktabs}
\usepackage{multirow}
\usepackage{multicol, blindtext}


%

%
\ifCLASSOPTIONcompsoc
  \usepackage[nocompress]{cite}
\else
  \usepackage{cite}
\fi
%

%
\ifCLASSINFOpdf
\else
\fi
\hyphenation{op-tical net-works semi-conduc-tor}

\begin{document}
%
\title{Iterative DNA Coding Scheme\\With GC Balance and Run-Length Constraints\\Using a Greedy Algorithm}
%
%
%
%

\author{Seong-Joon~Park,
        Yongwoo~Lee,
        and~Jong-Seon~No
\IEEEcompsocitemizethanks{\IEEEcompsocthanksitem S.J Park, Y. Lee, and J. S. No are with the Department of Electrical and Computer Engineering, INMC, Seoul National University, Seoul 08826, Korea.\protect\\

}
\thanks{}}

\IEEEtitleabstractindextext{%
\begin{abstract}
In this paper, we propose a novel iterative encoding algorithm for DNA storage to satisfy both the GC balance and run-length constraints using a greedy algorithm. DNA strands with run-length more than three and the GC balance ratio far from 50\% are known to be prone to errors. The proposed encoding algorithm stores data at high information density with high flexibility of run-length at most $m$ and GC balance between $0.5\pm\alpha$ for arbitrary $m$ and $\alpha$. More importantly, we propose a novel mapping method to reduce the average bit error compared to the randomly generated mapping method, using a greedy algorithm. The proposed algorithm is implemented through iterative encoding, consisting of three main steps: randomization, M-ary mapping, and verification. The proposed algorithm has an information density of 1.8523 bits/nt in the case of $m=3$ and $\alpha=0.05$. Also, the proposed algorithm is robust to error propagation, since the average bit error caused by the one nt error is 2.3455 bits, which is reduced by $20.5\%$, compared to the randomized mapping.
\end{abstract}

\begin{IEEEkeywords}
Bioinformatics, constrained coding, DNA storage, error propagation, greedy algorithm, iterative algorithm.
\end{IEEEkeywords}
}\par

\maketitle

\IEEEdisplaynontitleabstractindextext

%
\IEEEpeerreviewmaketitle

\section{Introduction}\label{sec:introduction}

%
%
%
%
\IEEEPARstart{R}{ecently}, there is a massive amount of data being produced every day.
In 2025, nearly 175 zettabytes of data are expected to be created \cite{b1}. To handle and store all this information, the need for a new archival storage system has arisen. There are three main important aspects of new archival systems: density, durability, and energy cost. However, current storages such as magnetic tape, hard disk drive (HDD), and solid-state drive (SSD) cannot store exponentially growing data. Therefore, new archival storages that satisfy these constraints and store a huge amount of data have been researched.\par
Among several candidates, deoxyribonucleic acid (DNA) emerges as a suitable medium for a new storage system \cite{b2}, which is called DNA storage. The main idea is to map the data to four nucleotides of DNA, adenine, cytosine, guanine, and thymine, denoted by `A', `C', `G', and `T', respectively. Since the size of nucleotides is extremely small, DNA storage can theoretically store up to one exabyte of data per cubic millimeter. It is already proved by the experiment that in one gram of DNA, nearly 215 petabytes of data can be stored \cite{b3}. Also, data can be stored for more than centuries in DNA storage, and also has low energy cost for storing the data. These are the reasons why DNA storage is more suitable for future archival storage compared to other devices, such as flash memory, HDD, and magnetic tape \cite{b4}. Because of these reasons, DNA storage is currently an active area of research in storage systems.\par


Despite these many advantages, DNA storage has several shortcomings that should be overcome. DNA storage has a relatively high synthesis cost. This problem requires an efficient encoding algorithm to store a large amount of data in less number of nucleotides. Also, the error rate of DNA storage is greatly influenced by the biochemical structure of DNA. The following two biochemical constraints should be met because these can cause a high error rate in both the synthesizing and sequencing processes.\par
\begin{itemize}[leftmargin=20pt]
  \item \textit{GC balance ratio}: GC balance ratio is defined by the ratio between the number of G and C nucleotides and the number of A and T nucleotides. This ratio needs to be near 0.5 because high or low GC balance ratio causes high drop out rates and polymerase chain reaction (PCR) errors \cite{b3}, \cite{b5}, \cite{b6}. Balancing this ratio leads to a lower error rate during both synthesis and sequencing processes. Therefore, it is important to balance this ratio.\\
  \item \textit{Maximum run-length}: Maximum run-length is the maximum number of consecutive identical nucleotides in the DNA strand. It is known that substitution and deletion error rates increase if the maximum run-length is longer than six \cite{b5}-\cite{b6}. This would also cause a high error rate during the DNA storage processes.
   
\end{itemize}
Therefore, these two biochemical constraints should be met to have better performance in DNA storage.
\par

There are many studies to preserve these two constraints. Goldman \textit{et al.} \cite{b7} compressed the raw data using Huffman coding and preserved the maximum run-length limit. However, they did not preserve the GC balance ratio. Xue \textit{et al.} \cite{b8} did not preserve the maximum run-length limit, but made GC balance ratio exactly 0.5 with deletion/insertion/mutation error correction. Some other researches preserved both the GC balance ratio and the maximum run-length limit. Erlich and Zielinski \cite{b3} used Luby transform and screening method to preserve these two constraints and proposed DNA Fountain code which stores the data with high physical density. Lu \textit{et al.} \cite{b9} also applied DNA Fountain encoding scheme to preserve both constraints and proposed a new log-likelihood ratio calculation for low-density parity-check codes. Yazdi \textit{et al.} \cite{b10} preserved the GC balance ratio by partitioning data into eight bases and using specialized constrained coding. Also, they limited the maximum run-length by using homopolymer check codes. Mishra \textit{et al.} \cite{b11} used the minimum Huffman variance tree and its complementary tree to compress the data and limit run-length at most 1. Also, they limited the GC balance ratio to nearly 0.5. Immink and Cai \cite{b12}, \cite{b20} proposed mathematical DNA coding that preserves both the GC balance ratio and maximum run-length limit. Wang \textit{et al.} \cite{b21} proposed DNA coding with high information density of 1.92 bit/nt, which converts 23 bits to 11 nt DNA sequence. They preserved both the GC balance ratio between $40\%$ to $60\%$ and the maximum run-length at most three by using a finite state transition diagram. Nguyen \textit{et al.} \cite{b22} proposed constrained coding with information density of $1.92$ bit/nt in error-free channel. They used run-length limit code to preserve the maximum run-length at most three and Knuth's balancing technique to preserve the GC balance ratio between $40\%$ to $60\%$. Also, they applied error-correcting code to handle a single substitution/insertion/deletion error.\par
In this paper, we propose a new iterative DNA coding algorithm that satisfies two constraints mentioned above with high information density. More importantly, to reduce the average bit error compared to the randomly generated mapping method, we proposed a novel mapping method of mapping one 48-ary symbol to the three nt DNA sequence. The average bit error caused by one nt error is $2.3455$ bits, which is reduced by $20.5\%$, compared to the randomized mapping method. Therefore, the proposed algorithm is robust to error propagation. Also, this algorithm guarantees high flexibility for various constraints. The DNA strand with the GC balance ratio between $0.5\pm \alpha$ and maximum run-length of $m$ can be obtained. One can flexibly set the desired values $\alpha$ and $m$ for the proposed iterative encoding algorithm. Not only these two constraints but also other desired constraints such as avoiding specific patterns (for example, primers) of DNA can be set. 
This is implemented through iterative encoding, which consists of three steps: randomization, $M$-ary mapping, and verification. Since the GC balance ratio and maximum run-length limit are two main constraints for DNA storage, this paper mostly focuses on preserving these two constraints, especially for the typical DNA data storage of $m=3$ and $\alpha=0.05$ \cite{b3}, \cite{b6}.\par
This paper is organized as follows. {Section \ref{section2}} contains definitions and preliminaries to understand concepts in DNA storage. In {Section \ref{section3}}, we describe the proposed iterative coding algorithm consisting of three steps that compress the raw data with two constraints. In {Section \ref{section4}}, we calculate the information density by using well-known raw data files and compare them with existing works. {Section \ref{section5}} concludes the paper.

\vspace{10pt}
\section{Definitions and Preliminaries}
\label{section2}
In this paper, we define four types of nucleotides as quaternary numbers: $A=0$, $C=1$, $G=2$, and $T=3$.
\begin{definition}
Let $\mathbf{v}$ $=(v_1, ..., v_n)$, $v_i\in\{0,1,2,3\}$, denote a DNA strand of length $n$ with quaternary elements. Define the number of `1' and `2' in $\mathbf{v}$ as $\eta(G,C)$. Then GC balance ratio is defined as

\[r_{GC} = \frac{\eta(G,C)}{n}.\]

In this paper, we say that DNA strand is balanced if $0.5-\alpha \le r_{GC} \le 0.5+\alpha$ for small $\alpha$.
\end{definition}
\par
\begin{definition}
Let $m$ be the maximum number of consecutively repeated identical nucleotides in the DNA strand, and we call it maximum run-length. In this paper, we say that the DNA strand satisfies the maximum run-length limit $m$.
\end{definition}

\vspace{10pt}
\section{Proposed Iterative Encoding Algorithm}
\label{section3}
In this section, we propose a new encoding algorithm of DNA coding with two constraints in detail. It is implemented through iterative encoding, which consists of three steps: randomization, $M$-ary mapping, and verification. DNA sequences created by the proposed encoding algorithm satisfy given constraints, the GC balance ratio and the maximum run-length. It is worth noting that, depending on the application, any efficient source coding can be applied before the proposed encoding algorithm. Since raw data files are used in our experiment, we apply the source coding step before the proposed encoding algorithm. Here, we perform binary source coding, considering each divided block of length $k$ as a source symbol. Then the compressed output $F_{comp}$, as a binary form, is obtained.

\begin{table}[t!]
\centering
\scriptsize
\renewcommand{\arraystretch}{1.2}
\caption{Comparison of substitution errors across different bases and average bit error}
\label{table_error}
\resizebox{\textwidth}{!}{
\begin{tabular}{|c|c|c|c|c|c|c|c|c|c|c|c|c|}
\hline
Base to base       & G to A & G to T & C to A & C to T & T to C & A to G & T to A & A to T & T to G & G to C & A to C & C to G \\ \hline
Sub. error prob.$(\%)$& 14.133 & 13.773 & 8.894  & 7.842  & 7.142  & 7.067  & 7.050  & 7.046  & 6.948  & 6.889  & 6.826  & 6.387  \\ \hline
Average bit error & 2      & 2.357  & 2.214  & 2.357  & 2.357  & 2      & 2.5    & 2.5    & 2.357  & 2.857  & 2.214  & 2.857  \\ \hline
\end{tabular}
}
\end{table}

\subsection{Randomization}
\label{randomization}
After obtaining the compressed output $F_{comp}$, we perform randomization to make the GC balance ratio between $0.5\pm \alpha$. In the coding theory, there is a scheme called guided scrambling \cite{b19}, which is similar to the randomization step. It is applied with the verification step to satisfy the GC balance ratio. In the randomization process, the input value $r$ is used to generate a randomized sequence using a random number generator function such as a hash function to obtain the output $h(r)$. After that, we perform bitwise XOR operation of $F_{comp}$ and $h(r)$ to obtain $F_{rand}$, which is randomized binary data of $F_{comp}$. Generally, the length of $F_{comp}$ is longer than $h(r)$, and thus $F_{comp}$ is partitioned into the block of the same length as $h(r)$. Since $h(r)$ is randomized sequence, it also guarantees that $F_{rand}$ is randomized. Also, the randomization step is performed iteratively, and it is explained more specifically in Section \ref{section_verification}.\par

\subsection{$M$-ary Mapping}
\label{mapping}
The next step is to map the binary sequence to the DNA sequence. In this step, the binary sequence is converted to $M$-ary symbols ($M=3\cdot4^{m-1}$) and mapped to the DNA sequence by using the mapping table to satisfy the maximum run-length $m$. 
The most important feature of this step is that we propose the mapping table to reduce the average bit error when one nt error in the DNA sequence occurs. The mapping table is greedily constructed according to the substitution error probability between bases based on our experiment. In this subsection, we present two different mapping methods. While mapping the binary sequence to the DNA sequence, there is a trade-off between an information density and an error propagation. The first method is to have an advantage on an information density, and the second method is to have an advantage on an error propagation.
The first method is to convert the whole binary sequence to $M$-ary symbols. The second method is to partition the binary sequence into blocks of 11 bits and convert it to two 48-ary symbols in the case of $m=3$. The first method has  higher information density compared to the second method. However, the second method is robust to error propagation.

\subsubsection{The Mapping Method Using a Greedy Algorithm}
\label{section_greedy}
In both methods, the binary sequence is converted to $M$-ary symbols and mapped to the DNA sequence using a mapping table. To ensure that the maximum run-length $m$ is satisfied after the mapping step, we should have $M=4^{m-1}\cdot3$. Therefore, in the typical DNA data storage of $m=3$, $48$-ary mapping table is required in the mapping step.

\begin{lemma}
\label{lemma_run_length}
Let $\mathbf{x}$ be the vector of length $m$ and $\mathbf{X}_{i}$ be the set of vectors defined as
\[\mathbf{X}_{i} = \{\mathbf{x}=(x_1,...,x_m)\in \{0,1,2,3\}^m\mid x_i \neq x_{i+1}\},\]
for an integer $i$.\par
If any vectors in $\mathbf{X}_{i}$ are appended together, they have run-length less than or equal to $m$.
For example, $\mathbf{X}_{2}=\{(0,0,1),(0,0,2), ... (3,3,1), (3,3,2)\},$ where $m=3$ and $i=2$ and then, $x_2 \neq x_3$ for all $\mathbf{x} =(x_1,x_2,x_3) \in \mathbf{X}_{2}$.
\end{lemma}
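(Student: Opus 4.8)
The plan is to reduce the statement to a clean claim about concatenations. First I would fix notation: write $\mathbf{w}=\mathbf{x}^{(1)}\Vert\mathbf{x}^{(2)}\Vert\cdots\Vert\mathbf{x}^{(N)}$ for the sequence obtained by appending $N$ arbitrary vectors $\mathbf{x}^{(1)},\dots,\mathbf{x}^{(N)}\in\mathbf{X}_{i}$, so that $\mathbf{w}$ has length $Nm$ and its entry in global position $(j-1)m+t$ is the $t$-th coordinate of $\mathbf{x}^{(j)}$ for $1\le t\le m$. The goal then becomes: every maximal run of equal symbols in $\mathbf{w}$ has length at most $m$; equivalently, there is no index $a$ with $w_{a}=w_{a+1}=\cdots=w_{a+m}$. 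The single structural fact I would pull out of the definition of $\mathbf{X}_{i}$ is that inside block $j$ the coordinates in positions $i$ and $i+1$ differ, i.e.\ in global coordinates $w_{(j-1)m+i}\neq w_{(j-1)m+i+1}$ for every $j\in\{1,\dots,N\}$; I would call these the guaranteed mismatch positions and note that they are precisely the positions $\equiv i \pmod m$ lying in $\{1,\dots,Nm-1\}$.

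Then I would argue by contradiction. Assume $w_{a}=w_{a+1}=\cdots=w_{a+m}$ for some $a$ with $1\le a$ and $a+m\le Nm$. The $m$ consecutive integers $a,a+1,\dots,a+m-1$ form a complete residue system modulo $m$, so exactly one of them, say $p^{*}$, satisfies $p^{*}\equiv i\pmod m$. Since the coordinate $x_{i+1}$ must exist we have $1\le i\le m-1$, and since $1\le p^{*}\le Nm-1$ the congruence forces $p^{*}=(j^{*}-1)m+i$ for a unique integer $j^{*}$ with $1\le j^{*}\le N$. Hence $p^{*}$ and $p^{*}+1$ are the global positions of coordinates $i$ and $i+1$ of the block $\mathbf{x}^{(j^{*})}$, so the defining property of $\mathbf{X}_{i}$ gives $w_{p^{*}}\neq w_{p^{*}+1}$. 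But $p^{*}$ and $p^{*}+1$ both lie in $\{a,\dots,a+m\}$, on which $\mathbf{w}$ was assumed constant --- a contradiction. Therefore no run in $\mathbf{w}$ has length exceeding $m$, which is exactly the assertion of the lemma.

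The argument is short, and I expect the only point needing genuine care to be the index bookkeeping in the middle step: one has to check that the unique residue-$i$ index inside the length-$m$ window is an honest, in-range guaranteed mismatch position --- neither an index preceding the first block nor one past the last --- which is why I would carry the bounds $1\le p^{*}\le Nm-1$ and $1\le j^{*}\le N$ explicitly (an equivalent but slightly less uniform route is to observe that a run cannot cover a whole block, hence meets at most two consecutive blocks, contributing at most $m-i$ entries from one and $i$ from the next). I would also append a one-line tightness remark: for $m=3$, $i=2$, concatenating $(a,a,b)$ with $(b,b,c)$ where $a\neq b\neq c$ already yields the run $a,a,b,b,b,c$ of length $3=m$, so the bound ``$\le m$'' cannot be improved.
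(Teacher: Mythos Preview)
Your proof is correct. The paper's own argument is the two-block boundary analysis you mention parenthetically as an ``equivalent but slightly less uniform route'': it fixes two consecutive blocks $\mathbf{u},\mathbf{v}\in\mathbf{X}_i$, splits into the cases $u_{i+1}\neq v_i$ (crossing run $\le m-1$) and $u_{i+1}=v_i$ (crossing run $\le m$), and then asserts the general statement. Your main argument is genuinely different in structure: rather than localising to a single boundary, you run a global contradiction via the observation that the guaranteed-mismatch positions form an arithmetic progression of step $m$, so any window of $m+1$ consecutive entries must straddle one of them. This buys you a proof that is uniform in the number $N$ of concatenated blocks and needs no separate passage from the two-block case to the general one; the paper's version, by contrast, is slightly more informative in that it pins down exactly when the bound $m$ is attained at a given boundary. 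Your index bookkeeping (in particular the check that $p^{*}\neq Nm$ because $i\not\equiv 0\pmod m$, hence $p^{*}+1$ stays in range) is sound, and the tightness example is a nice addition not present in the paper.
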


\begin{proof}
 Let $\mathbf{u},\mathbf{v}\in \mathbf{X}_{i}$, that is, $u_i\neq u_{i+1}$ and $v_i\neq v_{i+1}$.
Then there are two cases to be considered.

\begin{enumerate}[label=(\roman*),leftmargin=20pt]
    \item When $u_{i+1} \neq v_i$, the maximum run-length is $m-1$ when $\mathbf{u}$ and $\mathbf{v}$ are appended together as $(\mathbf{u}\mid\mathbf{v})$.
    \item When $u_{i+1} = v_i$, the maximum run-length is $m$ when $\mathbf{u}$ and $\mathbf{v}$ are appended together as $(\mathbf{u}\mid\mathbf{v})$.
\end{enumerate}

For both cases, appending vectors in different way such as   $(\mathbf{v}\mid\mathbf{u})$, also has the same result. Therefore, vectors formed by appending any vectors in $\mathbf{X}_{i}$ have run-length less than or equal to $m$.
\end{proof}

\begin{theorem}
\label{thm_run_length}
Let $\mathbf{V}$ be the set of vectors consist of $\mathbf{v}$ $=(v_1, ..., v_m),\,v_i\in \{0, 1, 2, 3\}$. In $\mathbf{V}$, there exist at least $3\cdot 4^{m-1}$ vectors, whose combination of any of these elements have run-length at most $m$.
\end{theorem}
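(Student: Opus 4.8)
The plan is to exhibit an explicit set $\mathbf{V}$ of $3\cdot 4^{m-1}$ vectors in $\{0,1,2,3\}^m$ and then invoke Lemma~\ref{lemma_run_length} to conclude that arbitrary concatenations of these vectors have run-length at most $m$. The natural candidate, suggested directly by Lemma~\ref{lemma_run_length}, is to take $\mathbf{V} = \mathbf{X}_i$ for a fixed coordinate $i$ — say $i = m-1$, so that the constraint is $x_{m-1}\neq x_m$. I would first count this set: the coordinates $x_1,\dots,x_{m-1}$ are free (that is $4^{m-1}$ choices), and for each such choice the last coordinate $x_m$ must differ from $x_{m-1}$, leaving $3$ choices. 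Hence $|\mathbf{X}_{m-1}| = 3\cdot 4^{m-1}$, which already meets the claimed bound.

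The second step is to confirm that this set has the desired concatenation property. This is exactly the content of Lemma~\ref{lemma_run_length}: any vectors drawn from $\mathbf{X}_i$, when appended together in any order, produce a string with run-length at most $m$. A small point worth checking explicitly is that the lemma is stated for appending \emph{two} vectors, whereas here we need the property for concatenating arbitrarily many codewords in sequence; I would note that a long run spanning a codeword boundary is confined to at most two consecutive codewords (since each codeword has length $m$ and is itself internally free of runs longer than $m$ by construction — more precisely, the only way to get a run of length exceeding $m$ would require it to touch three consecutive codewords, forcing an entire middle codeword to be constant, which is impossible because its $(m-1)$-th and $m$-th entries differ), so the pairwise statement of Lemma~\ref{lemma_run_length} suffices. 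Then $\mathbf{V} = \mathbf{X}_{m-1}$ witnesses the theorem.

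The main obstacle, such as it is, is not the counting but making the reduction to Lemma~\ref{lemma_run_length} airtight: one must verify that a run cannot be "built up" across three or more codewords. The clean argument is that a maximal run of identical symbols, if it had length $\ge m+1$, would contain $m$ consecutive positions lying strictly inside a single codeword or straddling exactly one boundary — in the straddling case it would force the suffix of one codeword and the prefix of the next to be all equal, and if it extended far enough to cover a full interior codeword that codeword would be constant, contradicting $x_{m-1}\neq x_m$. Ruling this out reduces everything to the two-codeword picture already handled by the lemma. Everything else — the cardinality computation and the example $\mathbf{X}_2$ for $m=3$ given in the statement — is routine, so the proof should be short.
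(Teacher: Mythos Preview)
Your proposal is correct and matches the paper's own proof: the paper likewise exhibits $\mathbf{X}_i$, notes that $|\mathbf{X}_i| = 3\cdot 4^{m-1}$, and invokes Lemma~\ref{lemma_run_length} directly. The extra care you take in extending the lemma from pairwise concatenation to arbitrary-length concatenation (by observing that a run touching three consecutive codewords would force a constant middle codeword, contradicting $x_i\neq x_{i+1}$) is a detail the paper leaves implicit.
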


\begin{proof}
 It is easy to check that in Lemma \ref{lemma_run_length}, the vector set $\mathbf{X}_{i}$ has $3\cdot 4^{m-1}$ different vectors. Vectors formed by appending any vectors in $\mathbf{X}_{i}$ have the run-length at most $m$. Therefore, this ensures that in vector set $\mathbf{V}$, there are at least $3\cdot 4^{m-1}$ of vectors, in which combination of any of these vectors have maximum run-length $m$.
\end{proof}

\begin{table}[t!]
\centering
\small
\renewcommand{\arraystretch}{1.2}
\caption{Gray code of 48-ary symbols}
\label{table_gray}
\begin{tabular}{|c|c|c|c|c|c|}
\hline
Binary & Symbol & Binary & Symbol & Binary & Symbol \\ \hline
000000 & 0      & 011011 & 27     & 101000 & 40     \\ \hline
000001 & 1      & 011010 & 26     & 101010 & 42     \\ \hline
000011 & 3      & 011110 & 30     & 101011 & 43     \\ \hline
000010 & 2      & 011111 & 31     & 101001 & 41     \\ \hline
000110 & 6      & 011101 & 29     & 101101 & 45     \\ \hline
000111 & 7      & 011100 & 28     & 101111 & 47     \\ \hline
000101 & 5      & 010100 & 20     & 101110 & 46     \\ \hline
000100 & 4      & 010101 & 21     & 101100 & 44     \\ \hline
001100 & 12     & 010111 & 23     & 100100 & 36     \\ \hline
001101 & 13     & 010110 & 22     & 100101 & 37     \\ \hline
001111 & 15     & 010010 & 18     & 100111 & 39     \\ \hline
001110 & 14     & 010011 & 19     & 100110 & 38     \\ \hline
001010 & 10     & 010001 & 17     & 100010 & 34     \\ \hline
001011 & 11     & 010000 & 16     & 100011 & 35     \\ \hline
001001 & 9      & 011000 & 24     & 100001 & 33     \\ \hline
011001 & 25     & 001000 & 8      & 100000 & 32     \\ \hline
\end{tabular}
\end{table}

According to {Theorem \ref{thm_run_length}}, there are $3\cdot 4^{m-1}$ different vectors in $\mathbf{V}$, and each vector can be mapped to $3\cdot 4^{m-1}$-ary symbols. Let $M = 3\cdot 4^{m-1}$. In the typical DNA storage of $m=3$, we should have $M=48$ and the binary sequence is converted to 48-ary symbol. After that, each 48-ary symbol is mapped to three nt DNA sequence using 48-ary mapping table. Rather than forming the randomized mapping table, we propose the mapping table to reduce the average bit error when one nt DNA sequence error occurs. Here, the mapping table is greedily constructed according to the substitution error probability between bases. We have an experimental result and obtain the substitution error probability between bases \cite{b15}. A more detailed explanation for the experiment is mentioned in Section \ref{section_simulation_mapping}. Table \ref{table_error} shows the comparison of substitution errors across different bases according to our experiment. A substitution error from G to A has the highest probability, and a substitution error from C to G has the lowest probability.\par
We greedily form the mapping table according to Table \ref{table_error}. Starting from (A, A, C), we find the sequence that is most likely to be changed when one nt error occurs among 47 candidates, excluding (A, A, C). Table \ref{table_error} is used to find the next DNA sequence. When the DNA sequence, which is most likely to be changed when one nt error occurs, is chosen, we find the next DNA sequence from the chosen DNA sequence, eliminating the chosen DNA sequence from candidates. In this way, we find the next DNA sequence greedily until all 48 DNA sequences are used. For example, (A, A, C) is most likely to change to (A, A, A) since the substitution error from C to A has the highest probability. However, (A, A, A) is not one of the candidates of 47 sequences since the last two bases are the same. Therefore, the next highest probability is the substitution from C to T, and it is one of the candidates of 47 sequences. Therefore, the next DNA sequence would be (A, A, T). If there is no remaining candidate that differs only one nt, the DNA sequence with two nt differences with the highest probability is chosen for the next DNA sequence.\par

\begin{table}[t!]
\centering
\renewcommand{\arraystretch}{1.2}
\small
\caption{Example of 48-ary mapping table for $m=3$ using a greedy algorithm \label{table_map}}
\begin{tabular}{|c|c|c|c|c|c|}
\hline
DNA & Symbol & DNA & Symbol & DNA & Symbol \\ \hline
AAC & 0      & TCT & 27     & GTG & 40     \\ \hline
AAT & 1      & CCT & 26     & ATG & 42     \\ \hline
GAT & 3      & CCA & 30     & TTG & 43     \\ \hline
TAT & 2      & CCG & 31     & CTG & 41     \\ \hline
TGT & 6      & CAG & 29     & CTA & 45     \\ \hline
CGT & 7      & CAT & 28     & CGA & 47     \\ \hline
AGT & 5      & CAC & 20     & AGA & 46     \\ \hline
AGC & 4      & TAC & 21     & GGA & 44     \\ \hline
ATC & 12     & TGC & 23     & TGA & 36     \\ \hline
ATA & 13     & TTC & 22     & CGC & 37     \\ \hline
GTA & 15     & TTA & 18     & CTC & 39     \\ \hline
GCA & 14     & TCA & 19     & GTC & 38     \\ \hline
ACA & 10     & TCG & 17     & GAC & 34     \\ \hline
ACG & 11     & TAG & 16     & GGC & 35     \\ \hline
ACT & 9      & AAG & 24     & GGT & 33     \\ \hline
GCT & 25     & GAG & 8      & GCG & 32     \\ \hline
\end{tabular}
\end{table}

Now all 48 different DNA sequences should be mapped to all 48-ary symbols, respectively. Here, 48-ary symbols, from 0 to 47, can be expressed in 6 bits, from 000000 to 101111. We can reduce the average bit error by constructing the mapping table using a greedy algorithm. In this step, the mapping table can be constructed to have only one bit difference for adjacent symbols, such as Gray code. Table \ref{table_gray} is the Gray code of 48-ary symbols in the binary form, and adjacent symbols have only one bit difference. By mapping DNA sequences to 48-ary symbols in the order of Table \ref{table_gray}, less number of bit errors would occur when one nt error occurs in the DNA sequence. Table \ref{table_map} is the example of 48-ary mapping table for $m=3$. The greedy algorithm is applied in DNA sequences and DNA sequences are mapped with Gray code of 48-ary symbols. This mapping method would decrease the average bit error by $20.5\%$, compared to the one in the randomly generated mapping table.The more detailed explanation is given in Section \ref{section_simulation_mapping}.\par


\subsubsection{The Mapping Method With High Information Density}
\label{section_ideal}
The first mapping method converts the whole binary sequence to $M$-ary symbols and maps $M-ary$ symbols to DNA sequences using a mapping table in Table \ref{table_map}. Since the binary sequence is randomized after the randomization step, the occurrences of all 48-ary symbols are equally likely. Therefore, it is easily known that the GC balance ratio is satisfied in this case. When the whole randomized binary data is mapped to $M$-ary symbols, the length of the binary sequence is shortened by the ratio of $\frac{1}{\log_2{M}}$ times. Then, $M$-ary symbols can be mapped to vectors, each representing $m$ nt long DNA sequence . In conclusion, $F_{rand}$ shortens $\frac{m}{\log_2{M}}$ times. Therefore, the information density of this step would be

\begin{equation}
\label{equ_1}
    \frac{\log_2{(3\cdot 4^{m-1})}}{m}= 2- \frac{1}{m}(2-\log_2{3}).
\end{equation}

According to (\ref{equ_1}), as $m$ becomes larger, the information density converges to the ideal upper bound, which is 2 bits/nt. For $m=3$ and $m=4$, the information density of the mapping step is 1.8617 bits/nt and 1.8962 bits/nt, respectively. Since, the capacity for $m=3$ is $1.9824$ bits/nt \cite{b22}, the coding efficiency, which is defined as dividing the information density by the capacity, is $94\%$. When the length of the binary sequence is $N$, the mapping method with high information density has complexity $O(N^2 \log N)$ , which is complexity of a base conversion.\par
\subsubsection{The Mapping Method With Robustness to Error Propagation}
\label{section_practical}
The second mapping method partitions the binary sequence into blocks and converts blocks to 48-ary symbols. In the first mapping method, converting whole binary sequence to $M$-ary symbols has several shortcomings. First, it has complexity $O(N^2\log N)$, where $N$ is the length of the sequence. However, in the second mapping method, it has complexity $O(N)$. Second, it has high error propagation. Because of its high error propagation, when one nt error occurs during the DNA storage process, the whole binary sequence would be corrupted. However, when the binary sequence is partitioned into blocks, only the block with error is corrupted. Therefore, it would be efficient to implement the second method in the DNA storage channel because of its low complexity and robustness to error propagation. Therefore, the binary sequence is partitioned into blocks of 11 bits and converted to decimal and two-digit 48-ary symbols in this method. Since the binary input data is randomized in the randomization step, the occurrence from 0 to 2047 for 11 bits is equally likely. However, when 11 bits are converted to two 48-ary symbols from (0,0) to (42,31), each 48-ary symbol does not occur equally likely. However, when the mapping step is combined with the verification and iterative encoding, which is the next step of the proposed encoding algorithm, it is ensured that the GC balance ratio is satisfied in the second method. A more detailed explanation and the proof is given in Section \ref{section_verification}. In the second method, 48-ary symbols are mapped to DNA sequences according to Table \ref{table_map}, the same as in the first method. However, the information density of this case is $\frac{11}{6}=1.8333$ bits/nt, which is slightly lower than the first method.

\begin{table}[t!]
\centering
\small
\renewcommand{\arraystretch}{1.2}
\caption{The minimum value of $\alpha$ for $I=4, 8$ and various $n$ with $\epsilon=10^{-4}$\label{table_randomization}}
\begin{tabular}{|c|c|c|c|c|c|}
\hline
\multirow{2}{*}{$I$} & \multicolumn{5}{c|}{$n$ (nt)}        \\ \cline{2-6} 
                   & 100  & 150  & 200  & 250   & 300   \\ \hline
4                  & 0.07 & 0.06 & 0.05 & 0.048 & 0.044 \\ \hline
8                  & 0.04 & 0.04 & 0.03 & 0.028 & 0.027 \\ \hline
\end{tabular}
\end{table}
\subsection{Verification and Iterative Encoding}
\label{section_verification}
After the mapping step, we partition the data into $n$-nt long DNA sequences and check whether the constraint is satisfied or not. Since the cost of synthesizing a long DNA sequence is very high, a DNA sequence should be partitioned into shorter sequences, about 150 nt to 300 nt, in the DNA storage \cite{b3}, \cite{b22}. We call the last step of the proposed coding algorithm, \textit{verification}: to find DNA sequences until the desired constraints are satisfied. This method is not defined in the field of DNA storage, and thus we define the term \textit{verification} as follows.

\begin{definition}
For a given DNA sequence and set of constraints that the DNA sequence should meet, the verification is defined by the process of checking whether all constraints are satisfied.
\end{definition}

\begin{algorithm}[h!]
\caption{Iterative Encoding \label{alg_verification}}
\textbf{Input:} input data $X$, desired constraint $U$\\
\textbf{Output:} valid DNA sequence $F_{dna}$.\\
\textbf{Initialization:} $r\leftarrow 0$
\begin{algorithmic}[1]
\STATE $F_{comp}\leftarrow \textsc{SourceCoding}(X)$\\
\WHILE{true}
    \STATE $F_{rand}\leftarrow \textsc{Randomization}(F_{comp}, r)$\\
    \STATE $F_{mapping}\leftarrow \textsc{M-aryMapping}(F_{rand})$\\
    \STATE $F_{partitioned}\leftarrow \textsc{Partitioning}(F_{mapping})$\\
    \IF{$\textsc{Verification}\left(U, (F_{partitioned}|r)\right)$ succeeds}
        \RETURN $F_{dna}\leftarrow (F_{partitioned}|r)$
    \ENDIF
    \STATE $r \leftarrow r+1$
\ENDWHILE
\end{algorithmic}
\end{algorithm}

The iterative encoding in Algorithm \ref{alg_verification} is an algorithm to obtain the DNA sequence which satisfies the desired constraints. When the constraints are not satisfied, we go back to the randomization step and encode the data again until the constraints are satisfied.\par
As we mentioned in Section \ref{section_ideal}, the binary sequence is randomized after the randomization step. Thus the occurrences of each 48-ary symbols are equally likely. Also, since the occurrences of each four bases (A, C, G, T) are the same as 36, it is easily known that the GC balance ratio is satisfied in the first case. However, since 11 bits are converted to the six nt DNA sequence in the mapping method with robustness to error propagation, the iterative encoding needs to be applied to satisfy the GC balance ratio. In the following theorem, we derive the lower bound of the required iteration number that satisfies the GC balance ratio between $0.5\pm \alpha$ in the case of $m=3$ after the 48-ary mapping step. In the case of $m=3$, let $\mathbf{v}=(v_1, ..., v_n)$, $v_i\in\{0,1,2,3\}$, denote a DNA sequence of length $n$ with quaternary elements. Let $p_{GC}$ be the probability of vector $\mathbf{v}$ to satisfy the GC balance ratio between $0.5\pm \alpha$ within the required number of iterations, $I$, defined by
\begin{equation}
\label{equ_pgc}
    p_{GC} = 1-\{1-P(0.5-\alpha \le r_{GC} \le 0.5+\alpha)\}^{I} \ge 1-\epsilon,
\end{equation}
where $r_{GC}$ denotes the GC balance ratio of $\mathbf{v}$. Let $\mathbf{X}_q$ be the random variable representing the number of occurrences of G or C in the $q$th six tuple $(v_{6q+1}, v_{6q+2}, v_{6q+3}, v_{6q+4}, v_{6q+5}, v_{6q+6})$, for $0\le q \le \frac{n}{6}-1$. We assume that $\mathbf{X}_q$'s are statistically independent. Let $p_{l}$ be the probability of $\mathbf{X}_q=l$, $0 \le l \le 6$. Then, we have 
\begin{equation}
\label{equ_aj}
    (p_{0}+p_{1}x+p_{2}x^2+...+p_{6}x^6)^{\frac{n}{6}}=\sum_{j=0}^{n}a_jx^j,
\end{equation}
where $a_j$ denotes the probability of the number of occurrences of G or C being $j$ and let
\begin{equation}
\label{equ_p}
    p(\alpha,n)=\sum_{(0.5-\alpha)n \le j \le (0.5+\alpha)n} a_j,
\end{equation}
which means $P(0.5-\alpha \le r_{GC} \le 0.5+\alpha)$ in (\ref{equ_pgc}).

\begin{theorem}

\label{thm_randomization}

Let the GC balance ratio between $0.5 \pm \alpha$ is satisfied in the proposed iterative encoding algorithm, for $m=3$. Then, for $n=200$ and $I=4$, $\alpha\ge0.05$ is achieved.

\end{theorem}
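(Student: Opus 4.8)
The plan is to reduce the statement to a single numerical inequality and then establish it by an explicit finite expansion of the generating function in \eqref{equ_aj}. From \eqref{equ_pgc} with $\epsilon=10^{-4}$ and $I=4$, the requirement $p_{GC}\ge 1-\epsilon$ is equivalent to $\{1-p(\alpha,n)\}^{4}\le 10^{-4}$, that is, $p(\alpha,n)\ge 1-(10^{-4})^{1/4}=1-10^{-1}=0.9$. Moreover, increasing $\alpha$ only enlarges the range of summation in \eqref{equ_p} over nonnegative terms, so $p(\alpha,n)$ is non-decreasing in $\alpha$; hence it suffices to prove $p(0.05,200)\ge 0.9$, after which every $\alpha\ge 0.05$ is achieved for $n=200$ and $I=4$.

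First I would determine the per-block GC distribution $p_{l}=P(\mathbf{X}_q=l)$ for $0\le l\le 6$. By the randomization step the $11$-bit blocks are equally likely over $\{0,\dots,2047\}$; each such value is written as a pair of $48$-ary digits (ranging from $(0,0)$ to $(42,31)$) and each digit is mapped through Table \ref{table_map} to a $3$-nt word, yielding a $6$-nt block. So I would enumerate all $2^{11}=2048$ inputs, apply this map, count the number of C's and G's in each resulting word, and set $p_l$ equal to the fraction of inputs whose $6$-nt block has GC count $l$. The standing independence assumption on the $\mathbf{X}_q$ across blocks then makes the total GC count have probability generating function $g(x)^{n/6}$ with $g(x)=\sum_{l=0}^{6}p_l x^l$, which is precisely \eqref{equ_aj}.

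Next I would compute the coefficients $a_j$ of $g(x)^{n/6}$ for $n=200$, i.e.\ raise $g$ to the power $\lfloor 200/6\rfloor=33$ (the two leftover nucleotides shift the GC count by at most $2$, which is absorbed into the gap between the computed value and $0.9$, or one may just take $n$ a multiple of $6$ as in the $n=150,300$ columns of Table \ref{table_randomization}), and then evaluate $p(0.05,200)=\sum_{90\le j\le 110}a_j$ and check that it is at least $0.9$. This exact expansion — a single discrete convolution of length $6\cdot 33+1$ — is the one genuinely non-trivial step: since each $\mathbf{X}_q$ ranges over an interval of width $6$, a Hoeffding/Chernoff tail estimate over only $33$ blocks is far too weak to push the mass of a window of half-width $10$ up to $0.9$, so no closed-form bound can substitute for the computation (a Gaussian approximation does indicate that the true value sits just above $0.9$, consistent with $\alpha=0.05$ being the threshold recorded in Table \ref{table_randomization}). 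Once $\sum_{90\le j\le 110}a_j\ge 0.9$ is verified, substituting back into \eqref{equ_pgc} gives $p_{GC}\ge 1-\epsilon$ at $\alpha=0.05$, $n=200$, $I=4$, and monotonicity in $\alpha$ finishes the proof.
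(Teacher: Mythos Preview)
Your proposal is correct and follows essentially the same route as the paper: both reduce \eqref{equ_pgc} with $I=4$ and $\epsilon=10^{-4}$ to the inequality $p(\alpha,n)\ge 1-\epsilon^{1/4}=0.9$ and then appeal to the generating-function expansion \eqref{equ_aj} to verify it numerically at $n=200$, $\alpha=0.05$. If anything, your version is slightly more careful---you make the monotonicity in $\alpha$ explicit and address the fact that $200$ is not a multiple of $6$, whereas the paper simply records the concrete $p_l$ values $(16,148,487,724,505,152,16)/2048$ and leaves the final check implicit.
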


\begin{proof}
Let $Y$ be a random variable representing the number of the base $G$ and $C$ in a vector $\mathbf{v}$. Since every 11 bits are converted to six-nt DNA sequence, $\mathbf{X}_q$, for $0 \le q \le \frac{n}{6}-1$, are independent and each $\mathbf{X}_q$ has the same $p_0, \dots, p_6$ in \eqref{equ_aj}. Therefore, the probability of $Y=j$ in $\mathbf{v}$ can be expressed as $a_j$ as in (\ref{equ_aj}). In other words, $a_j=P[Y=j]$. To satisfy balancing constraint for $I=4$ iterations, (\ref{equ_pgc}) can be rewritten as 
\[p_{GC}  = 1-\{1-p(\alpha,n)\}^{4}\ge 1-\epsilon.\]
Therefore, the lower bound of $p(\alpha,n)$ is
\[p(\alpha,n)\ge 1-\epsilon^{\frac{1}{4}}.\]
Since for $\epsilon=10^{-4}$, we say the probability $p_{GC}$ approximates to 1 and for $n=200$ and $I=4$,
\begin{equation}
\label{equ_alpha}
\sum_{(0.5-\alpha)n \le j \le (0.5+\alpha)n} a_j\ge 1-\epsilon^{\frac{1}{4}}.    
\end{equation}

Therefore, according to \ref{equ_alpha}, for $\alpha\ge0.05$ can be achieved for $n=200$ and $I=4$. 
\end{proof}

In the typical DNA storage with $m=3$, when 11 bits are converted to six nt DNA sequence using 48-ary mapping in Table \ref{table_map}, we can express (\ref{equ_aj}) as
\[(\frac{16}{2048}+\frac{148}{2048}x+\frac{487}{2048}x^2+\frac{724}{2048}x^3+\frac{505}{2048}x^4\]
\[+\frac{152}{2048}x^5+\frac{16}{2048}x^6)^{\frac{n}{6}}=\sum_{j=0}^{n}a_jx^j,\]
where $a_j$ and $p(\alpha,n)$ can be obtained. Table \ref{table_randomization} shows the minimum value of $\alpha$  for $I=4, 8$ and various $n$, using Theorem \ref{thm_randomization} for $\epsilon=10^{-4}$. For example, four times of iteration satisfies the GC balance ratio between $[0.45, 0.55]$ for $n=200$, and $[0.456, 0.544]$ for $n=3$. For the longer length and larger iteration number, the tighter GC balance ratio can be satisfied. Therefore, for the longer DNA sequences, the proposed iterative encoding algorithm can satisfy the tighter GC balance ratio for the fixed iteration number of $I=4$. In addition, since the maximum DNA synthesis length becomes longer recently, the tighter GC balance ratio can be achieved in the proposed iterative encoding algorithm.\par

We chose $I=4$ in Theorem \ref{thm_randomization} due to following reasons: it is good to use a multiple-of-four $I$ because $I$ should be converted to quaternary DNA symbols, and in practice, $I=4$ is enough to satisfy the GC balance ratio in the proposed iterative encoding algorithm. Since randomization input value $r$ must be stored to obtain randomized output $h(r)$ in the decoding procedure, the iteration number at most four can be expressed as one quaternary symbol. Therefore, a multiple-of-four $I$ is adequate for the iteration number, but only $I=4$ is enough to satisfy the GC balance ratio in the proposed iterative encoding algorithm. Starting from $r=0$, as iteration number increases, $r$ increases together until the GC balance ratio is satisfied. In other words, $r=I+1$. For example, $I=3$ means $r=2$, and the input value $r$ from 0 to 3 can be expressed as one quaternary symbol: $0\rightarrow A$, $1\rightarrow C$, $2\rightarrow G$, $3\rightarrow T$. Therefore, only one redundancy is appended on the front of the partitioned DNA sequence of $F_{mapping}$, which is $F_{partitioned}$, for satisfying the desired constraints. If the constraint is satisfied after appended the input value $r$, we obtain the final DNA sequence $F_{dna}$. According to Theorem \ref{thm_randomization} and Table \ref{table_randomization} the GC balance ratio between $0.5\pm 0.05$ can be satisfied for $n=200$ within 4 times of iterations. Also, since only one-nt long DNA sequence is required for $r$, this occupies a very small part compared to the data.\par

Another important feature of the proposed encoding algorithm is flexibility. The desired constraints could be not only GC balance ratio but also other specific constraints. For example, avoiding particular patterns in the DNA strand could be possible. In future DNA storage, one pool might contain multiple files. Also, each file needs a primer, which is a pattern of DNA that is used for the polymerase chain reaction (PCR) or sequencing process, and different files have a different pattern of primer \cite{b18}.
To not be confused with primer and payload, each file should avoid the primers of other files' patterns in its payload. We can set undesired patterns and encode them iteratively using verification until that pattern does not appear in the payload.\par
In addition to flexibility, the proposed algorithm is very robust to errors. As we mentioned in Section \ref{section_greedy}, we greedily constructed the mapping table, using the error probability obtained from the DNA storage experiment to reduce the average bit error when one nt DNA sequence error occurs. Therefore, the average bit error for the proposed encoding algorithm is reduced by $20.5\%$ compared to the randomly generated mapping table. Also, as mentioned in Section \ref{section_practical}, since the binary sequence is converted to the DNA sequence by partitioning it into blocks, the DNA sequence is robust to error propagation. These two features make the proposed encoding algorithm robust to errors. It is very important because, like other channels, the DNA storage channel also has several types of errors, such as substitution, insertion, and deletion errors. Also, there are many works that apply error-correcting code in the DNA storage \cite{b3}, \cite{b8}, \cite{b10}, \cite{b15}, \cite{b18}, \cite{b22}. Therefore, when an error-correcting code is applied in the proposed encoding algorithm, it would be very efficient to apply the proposed encoding algorithm to the DNA storage.\par

\subsection{Decoding Algorithm}
\label{section_decoding}
The decoding algorithm is performed through the inverse operation of the encoding algorithm. The decoding of the DNA strand is done by the following step:

\begin{enumerate}[leftmargin=20pt]
    \item Separate the value $r$ from $F_{dna}$.
    \item Map $F_{mapping}$ to $M$-ary symbols and obtain $F_{M \mhyphen ary}$.
    \item Obtain $F_{rand}$ by converting $F_{M \mhyphen ary}$ into binary sequence.
    \item XOR $h(r)$ with $F_{rand}$ to obtain $F_{comp}$.
    \item Obtain the raw input data $F$ by source decoding if source coding is applied.
    
\end{enumerate}

By using this algorithm, $F_{dna}$ can be uniquely decoded, and the raw data input file $F$ would be obtained.

\begin{table}[t!]
\centering
\renewcommand{\arraystretch}{1.2}
\small
\caption{Comparison of information density (bits/nt) with source coding\label{table_comparison}}
\resizebox{\textwidth}{!}{\begin{tabular}{|c|c|c|c|c|c|c|c|}
\hline
\multicolumn{2}{|c|}{\multirow{3}{*}{Method}}    & \multicolumn{3}{c|}{\begin{tabular}[c]{@{}c@{}}Mishra $(k=16)$\\ \cite{b11}\end{tabular}}                                          & \multicolumn{3}{c|}{Proposed work $(k=16)$}                                                                                      \\ \cline{3-8} 
\multicolumn{2}{|c|}{}                           & \multirow{2}{*}{Text file of 3920 bits} & \multicolumn{2}{c|}{\begin{tabular}[c]{@{}c@{}}Image files\\ in pixels\end{tabular}} & \multirow{2}{*}{Text file of 3920 bits} & \multicolumn{2}{c|}{\begin{tabular}[c]{@{}c@{}}Image files\\ in pixels\end{tabular}} \\ \cline{4-5} \cline{7-8} 
\multicolumn{2}{|c|}{}                           &                                         & $256\times256$                                  & $512\times512$                                  &                                         & $256\times256$                                   & $512\times512$                                  \\ \hline
\multirow{2}{*}{\begin{tabular}[c]{@{}c@{}}Information\\ density\end{tabular}} & $1^{st}$ method& \multirow{2}{*}{2.41}                   & \multirow{2}{*}{2.09}                     & \multirow{2}{*}{2.31}                    & 4.48                                    & 2.68                                      & 2.41                                     \\ \cline{2-2} \cline{6-8} 
                                     & $2^{nd}$ method&                                         &                                           &                                          & 4.39                                    & 2.63                                      & 2.37                                     \\ \hline
\end{tabular}}
\end{table}

 \normalsize

 \vspace{10pt}
\section{Results and Analysis}
\label{section4}

In this section, we show two performance results of the proposed encoding algorithm through the simulation. First, we show a reduction of the average bit error rate when one nt error occurs in the DNA sequence, compared to the randomly generated mapping table. Second, results for information density and iteration numbers are obtained when the proposed encoding algorithm is applied in raw data files. Here, we use the proposed encoding algorithm for one text file and two image files.
\subsection{The Comparison of the Average Bit Error Rate}
\label{section_simulation_mapping}
As we mentioned in Section \ref{section_greedy}, before creating a mapping table, we obtain the substitution error probability between bases by conducting experiments \cite{b15}. In this experiment, 18000 oligo sequences of length 152 nt with 300 nanogram DNA oligo pools are synthesized by Twist Bioscience. We use Illumina Miseq Reagent v3 kit (600 cycle) for sequencing and obtain 151 nt run for both forward and reverse reads. In this experiment, the substitution error probability across different bases are obtained as in Table \ref{table_error}.\par
Using these error probabilities, we create the mapping table, Table \ref{table_map}, and find the average bit error for each substitution error probability between bases. Table \ref{table_error} shows the average bit error for each case of substitution errors when one nt error occurs in the DNA sequence. Therefore, the average number of bit errors when one nt error occurs is 2.3455 bits in the proposed encoding algorithm. Every 48-ary symbol could be changed to any other 48-ary symbol except itself when one nt error occurs in the randomly generated mapping table. Therefore, the average number of bit errors can be obtained by considering all cases of converting one symbol to the other, from 0 to 1, 0 to 2, ... 47 to 48, and calculate the average bit errors caused by all cases. Then, the average number of bit errors is 2.9504 bits. In other words, the proposed mapping method could reduce the bit error by 0.6049 bits, which corresponds to $20.5\%$ reduction, compared to the randomly generating mapping method. Therefore, in this step, the proposed mapping method reduces the bit error $20.5\%$ in the DNA sequence while satisfying the maximum run-length at most three. In the constrained coding, even one nt error could cause a high bit error rate because of its high error propagation. However, the proposed algorithm prevents a single DNA error from causing a large number of bit errors.


\subsection{Simulation Results for Information Density and Iteration Numbers}
\label{section_simulation_result}
In our simulation, to compare with the recent work \cite{b11}, we used the same text file, and two image files as in \cite{b11}. The text file is a poem \textit{Where the Mind is Without Fear} by \textit{Rabindranath Tagore}, which consists of 3,920 bits with 490 characters. The second file is a grayscale image of an airplane with a size of $256\times256$, where each pixel consists of 8 bits. The last file is a grayscale image of \textit{'Lena'} with a size of $512\times512$, where each pixel consists of 8 bits. Each image file consists of 2,097,512 bits, 524,288 bits, respectively.\par
The experiment is held in the case of typical DNA storage, whose output has $m=3$ and $0.45\le r_{GC}\le 0.55$. In the case of source coding, any efficient scheme can be applied flexibly, but to compare with the work in \cite{b11}, the minimum variance Huffman tree code has been used. Here, we define a block of length $k$ as one symbol and we used the value $k=16$ for the comparison with \cite{b11}. Obtaining compressed data $F_{comp}$, we use the SHA-3 algorithm for randomization to balance the data. We initialize $r=0$ and the output length of $h(r)$ is 512 bits. Then we perform XOR of $F_{comp}$ and $h(r)$. Next, we use a 48-ary converting table to obtain the DNA sequence $F_{mapping}$ since we allow up to three runs. Finally, we first partitioned $F_{mapping}$ into 200 nt long DNA sequences. Then in the verification step, we check whether the final output is balanced or not, and for the balanced DNA sequences, we append the input value $r$ in the form of DNA. If the DNA sequence that does not satisfy the constraint remains, we start again from the randomization step increasing the value $r$ by 1. Repeat this process until all DNA sequences are balanced.
As mentioned in Section \ref{section_verification}, the iteration number of $I=4$ is enough for satisfying the GC balance ratio for all text and image files in our simulation. The more detailed explanation is in the last paragraph of this section.\par




For the text file, 3920 bits are compressed to 1618 bits using Huffman code ($k=16$), whose compression rate is 2.4227. Then we randomize the binary text file and convert the binary sequence to 48-ary symbols. In this step, 1618 bits are converted to 48-ary symbols of length 290 using the first mapping method. Using the second mapping method, 1618 bits are converted to 296 48-ary symbols. Then, each 48-ary symbol is converted to three nt DNA sequence using Table \ref{table_map}, which has the information density of 1.86 bits/nt. Finally, we append $r$ by converting it to the DNA sequence using Table \ref{table_map}, and then 3920 bits are converted to 875 nt DNA sequence. Table \ref{table_comparison} shows that the proposed encoding algorithm has improved the information density compared to the recent work \cite{b11}, for all files. As shown in Table \ref{table_comparison}, final information density using the first mapping method is $\frac{3920}{875}= 4.48$ bits/nt, which means one DNA nucleotide contains 4.48 bits. For the second mapping method, one DNA nucleotide contains 4.39 bits. In the mapping method, for image files of sizes $256\times256$ and $512\times512$, information density is 2.68 bits/nt and 2.41 bits/nt, for $k=16$, respectively. In the second mapping method, the information density of text file, image files of sizes $256\times256$ and $512\times512$ is 4.39 bits/nt, 2.63 bits/nt, 2.37 bits/nt, for $k=16$, respectively.\par

For another experiment, we find the number of DNA sequences of length 200 nt that satisfy the GC balance ratio between $[0.45, 0.55]$ at each iteration. According to Theorem \ref{thm_randomization} and Table \ref{table_randomization}, the GC balance ratio between $[0.45, 0.55]$ can be satisfied within four times of iterations. Table \ref{table_iteration} shows the experimental results for Theorem \ref{thm_randomization} and Table \ref{table_randomization}. As shown in Table \ref{table_iteration}, All DNA sequences satisfy the desired constraint within four times of iterations, and therefore only one redundancy is appended for each DNA sequence.


\begin{table}[t]
\centering
\small
\renewcommand{\arraystretch}{1.2}
\caption{The number of DNA sequences with $n=200$ that satisfy the GC balance ratio between $[0.45, 0.55]$ at each iteration \label{table_iteration}}
\begin{tabular}{|c|c|c|c|}
\hline
\multirow{2}{*}{$I$} & \multirow{2}{*}{Text file of 3920 bits} & \multicolumn{2}{c|}{Image files in pixels} \\ \cline{3-4} 
                       &                            & $256\times256$          & $512\times512$         \\ \hline
1                  & 4                          & 899                    & 3921                   \\ \hline
2                   & 0                          & 85                    & 429                   \\ \hline
3                  & 0                          & 6                    & 42                   \\ \hline
4                  & 0                          & 0                    & 4                   \\ \hline
success rate                  & $100\%$                          & $100\%$                    & $100\%$                   \\ \hline

\end{tabular}
\end{table}

\vspace{10pt}
\section{Conclusion}
\label{section5}
In the proposed DNA encoding algorithm, we applied the greedy algorithm in the mapping step to reduce the average bit error when one nt error occurs in the DNA sequence. As a result, there is $20.5\%$ of the average bit error reduction compared to the randomly generated mapping method. In addition to this feature, the proposed encoding algorithm is robust to error propagation. Also, we used the iterative encoding algorithm with three steps to convert the raw file into the DNA strand. Compared to the existing works, the proposed encoding algorithm has high information density and flexible feature for the desired GC balance ratio of $0.5\pm\alpha$ and the maximum run-length of $m$. Therefore, the proposed encoding algorithm could reduce the synthesizing cost and error that could be occurred in the DNA storage processes. Since it is important to retrieve the data without any error in the storage system, the proposed encoding algorithm would be very useful in the field of the DNA storage because of its robustness of error and flexible property.

\vspace{10pt}
\bibliographystyle{IEEE}

%




\end{document}